\newtheorem{theorem}{Theorem}
\newtheorem{lemma}{Lemma}
\begin{document}

\begin{frontmatter}

 \title{Dynamics of the Shapovalov mid-size firm model \footnote{This work was done under the auspices of the Institute for Nonlinear Dynamical Inference at the International Center for Emerging Markets Research (http://icemr.ru/institute-for-nonlinear-dynamical-inference/).} }
 
\author[hse]{Tatyana~A.~Alexeeva\,}
\author[kan,cefs]{William~A.~Barnett}
 \author[spbu,fin,ipm]{Nikolay~V.~Kuznetsov\,\corref{cor}}
 \ead{Corresponding author: nkuznetsov239@gmail.com}
 \author[spbu]{Timur~N.~Mokaev\,}
 
 \address[hse]{St. Petersburg School of Mathematics, Physics  and Computer Science, National Research University Higher School of Economics, 194100 St. Petersburg, Kantemirovskaya ul., 3, Russia}
 \address[kan]{Department of Economics, University of Kansas, Lawrence, KS 66045, USA}
 \address[cefs]{Center for Financial Stability, New York, NY 10036, USA}
 \address[spbu]{Faculty of Mathematics and Mechanics,
 St. Petersburg State University, 198504 Peterhof,
 St. Petersburg, Russia}
 \address[fin]{Department of Mathematical Information Technology,
 University of Jyv\"{a}skyl\"{a},  40014 Jyv\"{a}skyl\"{a}, Finland}
 \address[ipm]{Institute for Problems in Mechanical Engineering RAS, 199178 St. Petersburg, V.O., Bolshoj pr., 61, Russia}

% \thispagestyle{empty} 
% \vspace*{150mm}
% \begin{flushright} 

%    Dedicated to our teacher the brilliant scientist G.A. Leonov
% \end{flushright}
%\maketitle
\begin{abstract}

One of the main tasks in the study of financial and economic processes is forecasting and analysis of the dynamics of these processes. Within this task lie important research questions including how to determine the qualitative properties of the dynamics (stable, unstable, deterministic chaotic, and stochastic process) and how best to estimate quantitative indicators: dimension, entropy, and correlation characteristics.

These questions can be studied both empirically and theoretically. In the empirical approach, one considers the real data represented by time series, identifies patterns of their dynamics, and then forecasts short- and long-term behavior of the process. The second approach is based on postulating  the laws of dynamics for the process, deriving mathematical dynamic models based on these laws, and conducting subsequent analytical investigation of the dynamics generated by the models.

To implement these approaches, both numerical and analytical methods can be used. It should be noted that while numerical methods make it possible to study complex models, the possibility of obtaining reliable results using them is significantly limited due to calculations being performed only over finite-time intervals, numerical integration errors, and the unbounded space of possible initial data sets. In turn, analytical methods allow researchers to overcome these problems and to obtain exact qualitative and quantitative characteristics of the process dynamics. However, their effective applications are often limited to low-dimensional models (in the modern scientific literature on this subject, two-dimensional dynamic systems are the most often studied).

In this paper, we develop analytical methods for the study of deterministic dynamic systems based on the Lyapunov stability theory and on chaos theory. These methods make it possible not only to obtain analytical stability criteria and to estimate limiting behavior (localization of self-excited and hidden attractors, study of multistability), but also to overcome the difficulties related to implementing reliable numerical analysis of quantitative indicators (such as Lyapunov exponents and Lyapunov dimension). We demonstrate the effectiveness of the proposed methods using the ``mid-size firm'' model suggested recently by V.I.~Shapovalov as an example.
\end{abstract}

\begin{keyword}
	mid-size firm model, forecasting,
        global stability, chaos, absorbing set, Lyapunov exponents, multistability 
\end{keyword}

\end{frontmatter}

\section{Introduction}

Understanding and predicting the behavior of complex systems is one important task of current research in various fields. The events of the last decade have demonstrated the dangers of unpredictable developments in economic and financial systems, which can lead to systemic failures and even to collapses in the global financial-economic system. As a result, researchers have posed a number of conceptual questions, including search for approaches to forecasting critical transitions and determining stability indicators in complex systems \cite{Battiston_et.al.22-2016,Scheffer_etal21-2012}. Usually these characteristics are initially determined analytically, with subsequent experimental verification using real systems. At the same time, investigating and forecasting the dynamics of current financial and economic systems allows one to effectively control the systems, discovering and tracking stable, unstable, deterministic chaotic and stochastic processes. Studying current systems also allows quantitative estimation of the processes, using characteristics of their dimension, entropy and correlation.
To answer these conceptual questions, it is necessary to develop and apply reliable forecasting procedures, which will significantly reduce the costs of unpredictable behavior of financial and economic systems, including in times of crises, and allow researchers  to offer recommendations for stabilizing the dynamics of these systems.

In the second half of the last century, after the discovery of chaotic processes in dynamic systems by Ueda and Lorenz \cite{Lorenz-1963,UedaAH-1973}, chaos theory began to be actively developed, and helped to explain the complexity and unpredictability of dynamic systems behavior.
The complexity of dynamic systems, which are associated with the non-linearity and limited predictability of their behavior, as well as a number of open problems, have spurred significant interest in chaos theory among economists. Research aimed to study and reveal non-trivial economic effects and to stabilize irregular processes (see, e.g. \cite{BenhabibD8-1981,BenhabibN7-1979,BoldrinLM1-1986,BrockH-1998,BrockS14-1998,BrockHomEc-1997,BrockMBook3-1989,BullardB5-1993,Day2-1983,Day24-1992,FarmerOY-1983,SugiharaM4-1990}).
Developments in this field from the 1970s to the present can be traced through the studies of many famous economists \cite{BarnettSert10-2000,BarnettEr17-2013,BarnettUEr18-2016,BenhabibN7-1979,BrockH9-1998,BrockHomEc-1997,Day24-1992,GrandmontBook25-1998,Medio6-1992} who have explored numerous examples of deterministic economic models that could generate nonperiodic fluctuations.
For example, chaotic dynamics has been studied from the angles of economic growth and development, market structure and game theory \cite{Barkoulas26-2008,BoldrinLM1-1986,BrockMBook3-1989,Day2-1983}, rational expectations models \cite{BenhabibD8-1981,BrockH9-1998}, open economy New Keynesian models \cite{BarnettEr17-2013,BarnettUEr18-2016} based on the Gali and Monachelli model \cite{GaliM19-2005}, and non-linear heterogeneous agent models (HAMs) with periodic and chaotic asset price fluctuations \cite{BrockH9-1998,HommesBook12-2006}. Researchers have also focused on the issue of monetary chaos by applying tools based on both the metric (correlation dimension and Lyapunov exponents) and topological (recurrence plots) approaches to chaos \cite{Barkoulas26-2008}. There have been attempts to identify business cycles by looking at financial time series \cite{BrockS14-1998}, to analyze the stock returns of the markets of G7 countries \cite{TiwariaG15-2019} and to forecast high-frequency trading transactions in foreign exchange markets \cite{HirataA16-2012}.
More recently, research has developed in two main directions. An empirical direction is interested in whether some actual economic time series are characterized by chaotic dynamics and in developing statistical tests for chaos, applying them to macroeconomic and financial time series. 
A more theoretical approach has focused on demonstrating the possibility of cyclical and chaotic dynamic behaviors that can occur in a wide range of theoretical models. These studies attempt to understand whether mathematical non-linear deterministic models can demonstrate the types of fluctuations commonly found in economic data.
On the one hand, the theoretical direction has included analyses of low-dimensional dynamic models reconstructed from time series of real data \cite{Medio6-1992}. On the other hand, the theoretical approach is based on the construction of mathematical models that incorporated a-priori assumptions about the dynamics of the process under study.
The main goal of both approaches, empirical and theoretical, has been to build reliable forecasts of the behavior of dynamic systems based on concepts and methods of chaos theory.
The central complexity of the empirical approach lies in the difficulty of distinguishing between chaotic behavior of a deterministic dynamic system and random fluctuations caused by measurement or sampling errors \cite{BullardB5-1993,SugiharaM4-1990}.
The theoretical approach suffers from the fact that only low-dimensional systems allow derivations of analytical results (usually, two-dimensional systems are considered \cite{Barkoulas26-2008,BrockS14-1998,Medio6-1992}).

The history of the development of chaos theory and its applications shows that the complexity of describing the dynamics of economic phenomena is largely associated with the difficulty of constructing the adequate mathematical models of the phenomena.
Moreover, attempts to fully describe such models and to observe real data in order to reconstruct models from it typically lead to high-dimensional dynamic models, including stochastic ones.
As a rule, only quantitative analysis using numerical procedures is possible for these models.
It is well known that in chaos theory, confirming the reliability of results obtained by numerical methods requires separate clarification, including aspects connected with the use of shadowing theory and analysis of computational errors (caused by a finite precision arithmetic and numerical integration of differential equations) \cite{Pilyugin-2006}. 
Computational procedures have a number of significant limitations. First, calculations are performed over finite-time intervals, which makes it difficult to distinguish between a process of transition and established (limiting) chaotic behavior. Second, the use of numerical integration algorithms is associated with computational instability, which inevitably leads to approximation errors. Third, the theoretically unbounded space of possible initial conditions does not allow efficient forecasting of the limiting behavior of a dynamic system in the phase space.

At the same time, for low-dimensional systems, it is possible to apply rigorous nonlinear methods from the dynamic systems theory, which allows one to obtain exact analytical results. This enables a derivation of the effective criteria for stability and absence of chaotic behavior in such systems.
Thus, both qualitative exploration and estimation of the quantitative characteristics of the dynamics of the process can be performed numerically, as well as analytically.
In particular, it is possible to calculate the Lyapunov dimension of an attractor using numerical procedures and to analytically localize the attractor via determination of the absorbing set.

In this paper, we use an analytical approach to analyze local and global stability of the dynamics of a mid-size firm model \cite{ShapovalovKBA-2004,ShapovalKaz-2015}.
We perform analytical localization of the attractor of the system and investigate the global stability of its dynamics.
This allows us to obtain parameter domains in which the system demonstrates various types of behavior: stable, unstable, or deterministic chaotic dynamics.
In addition, we solve the problem of forecasting established (limiting) behavior of this dynamic system, obtaining the condition of the global conversion to the stationary set and bouded localization of non-trivial attractor.
Thus, we overcome the challenge of unboundedness of the set of initial data and implement reliable numerical analysis of the model, including the study of its chaotic dynamics.
We use the adaptive algorithm of the finite-time Lyapunov dimension and Lyapunov exponents computation for the values of the model's parameters at which a chaotic attractor can be confirmed. Thus, we obtain a number of quantitative estimates, which allow us to calculate characteristics including the Lyapunov dimension and entropy.

\section{Problem statement}

Consider the model of V.I.~Shapovalov proposed in \cite{ShapovalovKBA-2004} which describes behavior of a mid-size firm

\begin{equation}
\begin{cases}
\begin{aligned}
&\dot x=-\sigma x+\delta y,\\
&\dot y=\mu x +\mu y-\beta xz,\\
&\dot z=-\gamma z+\alpha xy.\end{aligned} 
\label{SysShap6} 
\end{cases}
\end{equation}
Here $\alpha$, \, $\beta$, \, $\sigma$, \, $\delta$, \, $\mu$, \, $\gamma$ are positive parameters, and the variables $x$, $y$, $z$ denote the growth of three main factors of production: the loan amount $x$, fixed capital $y$ and the number of employees $z$ (as an increase in human capital). An increase in the loan amount is proportional to the amount of capital and the size of the loan taken out. The coefficient with the variable $y$ is positive on the premise that, with an increase in capital, the company is more likely to grant loans on the lending market; the coefficient for the variable $x$ is negative and indicates the losses that the company incurs when taking out a new loan, which is associated with the requirement to pay interest, as well as the fact that the company is less willing to give credit when it has many loan obligations. The capital gain is proportional to the income from the investment of available capital and the loan taken, as well as expenses for labor remuneration and loan repayment. The coefficients for the sum of the variables $x$ and $y$ are positive, since they show a positive effect of investing  in the development of production; the coefficient for the product of the variables $x$ and $z$ is negative, since it indicates the costs of the company. The increase in the number of employees is proportional to the capital, the loan taken and the current number of employees. The coefficient for the product of the variables $x$ and $y$ is positive, based on the assumption that the company may spend part of the amount of capital and the loan taken on attracting additional employees. A negative coefficient for the variable $z$ indicates that the outflow from the current number of employees due to dismissal or on their own initiative should be taken into account.

Coefficients at variables are control parameters: $\alpha$ reflects a combination of factors that contribute to creating a company image that will be attractive to new employees; $\beta$ summarizes factors that influence cost allocation; $\mu$ describes the effectiveness of capital investments (the effects of various taxes should be taken into account); $\gamma$ summarizes factors related to difficulties obtaining a loan; for example, a high interest rate, etc.

As part of the study of system \eqref{SysShap6}, \cite{GurinaD-2010,ShapovalovKBA-2004,ShapovalKaz-2015} formulated the task of nonlinear analysis of the system and its limit dynamics in order to predict the stability of the Shapovalov model \eqref{SysShap6} and determine the conditions under which the system has some predictable dynamics (the Shapovalov problem of a mid-size firm dynamics forecasting). The non-triviality of this problem lies in the fact that the system has an unstable state of equilibrium and may exhibit of chaotic dynamics.

\section{System transformation}

A significant number of papers has studied the behavior of three dimensional nonlinear dynamic systems. It is important to verify (see, e.g. \cite{LeonovK-2015-AMC}) which known systems can be reduced to system \eqref{SysShap6} using linear coordinate transformation. System \eqref{SysShap6} can be reduced to a Lorenz-like system

\begin{equation}
\begin{cases}
\begin{aligned}
&\dot x= - c x + c y, \\
&\dot  y=r x + y - x z, \qquad \mbox{where} \,\, c = \frac{\sigma}{\mu} , r=\frac{\delta}{\sigma}, b = \frac{\gamma}{\mu},\\
&\dot z=-b z + xy, \end{aligned}
\label{SysLorenz3}
\end{cases}
\end{equation}
using the following coordinate transformation

\begin{equation}
(x, y, z) \rightarrow  \left(\frac{\mu}{\sqrt{\alpha \beta}} x,\, \frac{\mu \sigma}{\delta \sqrt{\alpha \beta}} y, \, \frac{\mu \sigma}{\delta \beta} z\right), \, t \rightarrow \frac{t}{\mu}. 
\label{TrfShLor} 
\end{equation}
System \eqref{SysLorenz3} differs from the classical Lorenz system \cite{Lorenz-1963} in the sign of the coefficient at $y$ in the second equation, which is 1 here, while in the Lorenz system this coefficient is -1.

Accordingly, the inverse transformation

\begin{equation}
(x, y, z) \rightarrow  \left(\frac{\sqrt{\alpha \beta}}{\mu} x, \frac{r \sqrt{\alpha \beta}}{\mu} y, \frac{r \beta}{\mu} z\right),\, t \rightarrow \mu t
\label{InvtrfLorSh} 
\end{equation}
reduces system \eqref{SysLorenz3} to system \eqref{SysShap6} with coefficients $\sigma = c \mu, \delta = r c \mu, \gamma = b \mu$ \footnote{Transformations \eqref{TrfShLor} and \eqref{InvtrfLorSh} do not change the direction of time, which is essential for the analysis of the Lyapunov exponents and dimension \cite{LeonovK-2015-AMC}.}.

In addition, system \eqref{SysShap6} with parameters satisfying the relations $\sigma^2 / (\sigma- \delta) = \mu $ and $ \delta <\sigma <\mu$ can be reduced to the well-known Chen system \cite{ChenU-1999}

\begin{equation}
\begin{cases}
\begin{aligned}
&\dot x= - a x + a y, \\
&\dot  y= (c - a) x + c y - x z, \qquad \mbox{with} \,\, a = \sigma , c=\frac{\sigma^2}{\sigma - \delta} = \mu, b = \gamma, \, a<c,\\
&\dot z= - b z + x y, \end{aligned}
\label{SysChen}
\end{cases}
\end{equation}
using coordinate substitutions

\begin{equation}
(x, y, z) \rightarrow  \left(\frac{1}{\sqrt{\alpha \beta}} x,\, \frac{\sigma}{\delta \sqrt{\alpha \beta}} y, \, \frac{\sigma}{\delta \beta} z\right). 
\label{Trf:ShapChen} 
\end{equation}

The possibility of reducing system \eqref{SysShap6}  to the Chen system \eqref{SysChen}  under the above conditions shows the complexity of studying the mid-size firm model. The Chen system demonstrates much more complex behavior in terms of constructing its absorbing set \cite{BarbozaC-2011} than the Lorenz system, and the problem of analytical calculation of the dimension of its attractor \cite{LeonovK-2015-AMC} is still opened.

\section{Sustainability analysis}
Further, we analyze the system \eqref {SysLorenz3} and apply the inverse transformation \eqref{InvtrfLorSh} to obtain conditions on the parameters of system \eqref{SysShap6}. To solve the Shapovalov problem, using the standard stability analysis of dynamic systems, we calculate the equilibria of system \eqref{SysShap6}.
System \eqref{SysShap6} always has three equilibria

\begin{equation}
O_1^{(1)}=(0, 0, 0), O_{2,3}^{(1)}=\left(\pm \sqrt{\frac{\gamma \mu (\sigma + \delta)}{\alpha \beta \sigma}} , \pm \sqrt{\frac{\gamma \mu \sigma (\sigma + \delta)}{\alpha \beta \delta^2}} ,   \frac{\mu (\sigma + \delta)}{\beta \delta} \right).
\label{StPointShap} 
\end{equation}
Accordingly, system \eqref{SysLorenz3} also has three equilibria

\begin{equation}
O_1^{(2)}=(0, 0, 0), \qquad O_{2,3}^{(2)}=\left(\pm \sqrt{  {b}(  {r}+1)} , \pm \sqrt{  {b} (  {r}+1)},   {r}+1 \right).
\label{StPointLorenz} 
\end{equation}

For the Jacobian matrix of system \eqref{SysShap6}

\begin{equation}
J=\begin{pmatrix} -  \sigma  & \delta  &0\\
 \mu - \beta z & \mu &- \beta x\\
\alpha y & \alpha x& -  \gamma \end{pmatrix}
\label{Jacob_Shap} 
\end{equation}
the characteristic polynomial $\det (J - I s)$ has form

\begin{equation}
\chi(s,x,y,z)=s^3+p_1^{(1)}(x,y,z) s^2+p_2^{(1)}(x,y,z) s +p_3^{(1)}(x,y,z),
\label{CharPolShap} 
\end{equation}
where

\begin{equation}
\begin{aligned}
&p_1^{(1)}(x,y,z)=  \sigma  +  \gamma - \mu, \\
&p_2^{(1)}(x,y,z)=  \sigma (\gamma - \mu) - \mu (\gamma + \delta) + \alpha^2 x^2 + \alpha \delta z, \\
&p_3^{(1)}(x,y,z)= - \gamma \mu (\sigma + \delta) + \alpha^2 \sigma x^2 + \alpha^2 \delta x y + \alpha \delta \gamma z. \end{aligned} 
\label{Coef:RH_Shap} 
\end{equation}

\begin{lemma}\label{lemma1}
The equilibrium state $O_1^{(1)} = (0, 0, 0)$ of system \eqref{SysShap6} is unstable for all parameter values.
\end{lemma}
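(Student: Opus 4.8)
The plan is to invoke Lyapunov's first (indirect) method: it suffices to exhibit an eigenvalue of the Jacobian \eqref{Jacob_Shap} evaluated at $O_1^{(1)}$ having strictly positive real part, since the presence of such an eigenvalue forces the equilibrium to be unstable irrespective of the location of the remaining spectrum. First I would substitute $x=y=z=0$ into the coefficients \eqref{Coef:RH_Shap} of the characteristic polynomial \eqref{CharPolShap}. The quadratic term $\alpha^2 x^2$, the bilinear term $\alpha^2\delta\, xy$, and the linear-in-$z$ terms all vanish, leaving in particular the constant coefficient
\begin{equation}
p_3^{(1)}(0,0,0) = -\gamma\mu(\sigma+\delta).
\label{p3atorigin}
\end{equation}
The crucial observation is that, because $\gamma$, $\mu$, $\sigma$, $\delta$ are all positive parameters, this constant term is strictly negative for every admissible choice of parameters.

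From here the argument is an elementary sign-and-continuity consideration rather than a full root computation. The characteristic polynomial $\chi(s,0,0,0)$ is a monic cubic in $s$, so $\chi(s,0,0,0)\to+\infty$ as $s\to+\infty$, while $\chi(0,0,0,0)=p_3^{(1)}(0,0,0)<0$ by \eqref{p3atorigin}. By the intermediate value theorem there exists a real $s^\ast>0$ with $\chi(s^\ast,0,0,0)=0$; that is, the Jacobian at $O_1^{(1)}$ possesses a real eigenvalue lying in the open right half-plane. The existence of an eigenvalue with positive real part is precisely the condition under which Lyapunov's first method guarantees instability, so $O_1^{(1)}$ is unstable for all parameter values, which is the claim.

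I do not expect a genuine obstacle here, since the sign of the constant term does all the work; the only point requiring care is respecting the normalization convention of \eqref{CharPolShap} (the polynomial is presented as monic with leading term $+s^3$), so that the limit $\chi\to+\infty$ is correctly oriented. One could alternatively phrase the conclusion via the Routh--Hurwitz criterion, noting that positivity of all coefficients $p_1^{(1)},p_2^{(1)},p_3^{(1)}$ is necessary for asymptotic stability and that $p_3^{(1)}(0,0,0)<0$ already violates it. I nonetheless prefer the intermediate-value argument, because Routh--Hurwitz only excludes asymptotic stability (a root with nonnegative real part), whereas the continuity argument directly produces a strictly positive real eigenvalue and hence genuine instability.
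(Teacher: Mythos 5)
Your proof is correct and follows exactly the route the paper sets up: the paper states Lemma~\ref{lemma1} without an explicit proof, but the coefficients \eqref{Coef:RH_Shap} are computed precisely so that one reads off $p_3^{(1)}(0,0,0)=-\gamma\mu(\sigma+\delta)<0$ and concludes instability. Your intermediate-value refinement (producing a genuine positive real eigenvalue rather than merely violating Routh--Hurwitz) is a sound and slightly more careful way to finish the same argument.
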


\begin{proof}

Consider system \eqref{SysLorenz3} obtained from \eqref{SysShap6} by changing variables \eqref{TrfShLor}, with the Jacobian matrix

\begin{equation}
J=\begin{pmatrix} -  c &  c &0\\
 r- z & 1 &- x\\
y & x& -  b\end{pmatrix}. 
\label{JacLor} 
\end{equation}
At the point $O_1^{(2)} = (0, 0, 0)$, the coefficients of the characteristic polynomial of the Jacobian matrix \eqref{JacLor} have the following form

\begin{equation}
\begin{aligned}
&p_1^{(2)}(0, 0, 0)=  c +  b - 1, \\
&p_2^{(2)}(0, 0, 0)=  c(  b- r) - (  c +  b), \\
&p_3^{(2)}(0, 0, 0)=-  c   b(1+ r). \end{aligned} 
\label{CoefRH_Lor 0} 
\end{equation}
Since inequality $p_3^{(2)} (0, 0, 0) <0$ holds for any admissible values of the parameters of system \eqref{SysLorenz3}, the Routh-Hurwitz conditions are not satisfied and the equilibrium $O_1^{(2)}$ is always unstable. Using \eqref{InvtrfLorSh} we obtain the statement of Lemma for system \eqref{SysShap6}.
\end{proof}

\begin{lemma}\label{lemma2}
If one of the relations

\begin{equation}
\left[
  \begin{aligned}
 &r > \frac{  c(3-(  c+  b))}{  b -(  c+1)}, \qquad b >   c +1,\\
 &r < \frac{  c(3-(  c+  b))}{  b -(  c+1)}, \qquad 3 -  c <   b <   c +1 
  \end{aligned}
\right. 
\label{CondStab1} 
\end{equation}
holds for system \eqref{SysLorenz3} then the equilibria $O_{2,3}^{(1)}$ of system \eqref{SysShap6} are stable.

If both relations \eqref{CondStab1} are not satisfied, then the equilibria $O_ {2,3}^{(1)}$ of system \eqref{SysShap6} are unstable.
\end{lemma}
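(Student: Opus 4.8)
The plan is to exploit the stability-preserving character of the transformation \eqref{TrfShLor}--\eqref{InvtrfLorSh}: since it is a linear change of coordinates composed with a time rescaling $t\to t/\mu$ by the positive factor $1/\mu$ that does not reverse time, the local stability type of an equilibrium is invariant. Hence it suffices to analyze the corresponding equilibria of the Lorenz-like system \eqref{SysLorenz3}. Setting its right-hand sides to zero gives $y=x$, $z=r+1$ and $x^2=b(r+1)$, so the nontrivial equilibria are $O_{2,3}=\bigl(\pm\sqrt{b(r+1)},\,\pm\sqrt{b(r+1)},\,r+1\bigr)$ (with matching signs), and these correspond to $O_{2,3}^{(1)}$ under \eqref{InvtrfLorSh}.

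Next I would linearize \eqref{SysLorenz3} at $O_{2,3}$. Writing $x_0^2=b(r+1)$, the Jacobian is
\[
\begin{pmatrix} -c & c & 0 \\ -1 & 1 & -x_0 \\ x_0 & x_0 & -b \end{pmatrix},
\]
and a direct expansion of $\det(sI-J)$ yields the monic characteristic polynomial $s^3+a_1 s^2+a_2 s+a_3$ with
\[
a_1=c+b-1,\qquad a_2=b(c+r),\qquad a_3=2cb(r+1).
\]
Both signs of $x_0$ give the same polynomial, so $O_2$ and $O_3$ are simultaneously stable or unstable.

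The core of the argument is the Routh--Hurwitz criterion for a cubic: all roots lie in the open left half-plane iff $a_1>0$, $a_3>0$ and $a_1a_2-a_3>0$. For positive parameters with $r>0$ the inequalities $a_3>0$ and $a_2>0$ hold automatically, so stability is governed by $a_1>0$, i.e.\ $b>1-c$, together with $a_1a_2>a_3$. Substituting the coefficients, dividing by $b>0$ and collecting the terms linear in $r$, the last inequality simplifies to
\[
r\,(b-c-1) > c\,\bigl(3-(c+b)\bigr).
\]
I would then split on the sign of $b-c-1$. If $b>c+1$, division preserves the inequality and gives $r>\frac{c(3-(c+b))}{b-(c+1)}$, the first relation in \eqref{CondStab1}; here $a_1>0$ is automatic since $b>c+1>1-c$. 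If $b<c+1$, division reverses it and gives $r<\frac{c(3-(c+b))}{b-(c+1)}$; because $r>0$ and the denominator is negative, this is feasible only when the numerator is negative, i.e.\ $3-(c+b)<0$, forcing $b>3-c$ and hence $3-c<b<c+1$, the second relation (and $b>3-c>1-c$ again secures $a_1>0$). Conversely, each relation in \eqref{CondStab1} implies all three Routh--Hurwitz inequalities, giving stability, whereas failure of both relations violates $a_1a_2>a_3$ and yields instability.

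The main obstacle is the sign bookkeeping in this case split: one must reconcile the feasibility constraint $b>3-c$ (which arises from requiring $r>0$ in the case $b<c+1$) with the linear-in-$r$ condition $a_1>0$, and track how division by $b-(c+1)$ reverses the inequality around the nonlinear threshold $\frac{c(3-(c+b))}{b-(c+1)}$. One should also note that the borderline $b=c+1$ is excluded by the strict inequalities in \eqref{CondStab1} (there $a_1a_2-a_3=2c(c+1)(c-1)$, so stability would instead require $c>1$), so the stated conditions describe the generic stability region.
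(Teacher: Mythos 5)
Your proof is correct and follows the route the paper itself sets up (reduction to the Lorenz-like system \eqref{SysLorenz3}, the characteristic polynomial at $O_{2,3}$, and the Routh--Hurwitz test, whose coefficients the paper records in \eqref{Coef:RH_Shap}); the coefficients $a_1=b+c-1$, $a_2=b(c+r)$, $a_3=2cb(r+1)$ and the reduction of $a_1a_2-a_3>0$ to $r(b-c-1)>c\bigl(3-(c+b)\bigr)$ all check out, as does the sign bookkeeping in the case split. Your observation that the borderline $b=c+1$ with $c>1$ gives stability even though both relations in \eqref{CondStab1} fail is a legitimate caveat to the lemma's converse statement, not a gap in your argument.
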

\begin{proof}

Similar to Lemma~\ref{lemma1} we consider system \eqref{SysLorenz3} obtained from \eqref{SysShap6} by changing variables \eqref{TrfShLor}.
At points $O_ {2,3}^{(2)} = \left(\pm \sqrt{{b} ({r} +1)}, \pm \sqrt{{b} ({r} +1 )}, {r} +1 \right)$ the coefficients of the characteristic polynomial of the Jacobian matrix of system \eqref{SysLorenz3} are the following

\begin{equation}
\begin{aligned}
&p_1^{(2)}(O_{2,3}^{(2)})=  c +  b - 1, \\
&p_2^{(2)}(O_{2,3}^{(2)})=  b(  c + r), \\
&p_3^{(2)}(O_{2,3}^{(2)})=2  c  b( r +1). \end{aligned} 
\label{CoefRHS12} 
\end{equation}
If $c + b> 1$ is true then $p_1^{(2)} (O_{2,3}^{(2)})> 0$, $p_2^{(2)} (O_{2,3}^{(2)})> 0$, and $p_3^{(2)} (O_{2,3}^{(2)})> 0$ hold for any positive values of $c, \, b, \, r$.

Using the second relation from \eqref{CondStab1} the condition

\begin{equation}
p_1^{(2)}(O_{2,3}^{(2)}) p_2^{(2)}(O_{2,3}^{(2)}) - p_3^{(2)}(O_{2,3}^{(2)}) =  c (  c +  b -3) +  r (  b - (  c +1)) >0 
\label{CondRH} 
\end{equation}
holds. 
Hence, we obtain the stability conditions \eqref{CondStab1} for the equilibria $O_{2,3}^{(2)}$ .
Using \eqref{InvtrfLorSh} we obtain the statement of Lemma for system \eqref{SysShap6}.
\end{proof}

\section{Analytical localization of the global attractor}

It is important to show that system \eqref{SysShap6}  does not have trajectories tending to infinity either for a finite or for an infinite period of time for a correct mathematical description of economic processes in a model and the possibility of studying its limit dynamics. Next, we distinguish the domain of the parameters of system \eqref{SysShap6} for which all trajectories are bounded and, moreover, which over time fall into a limited closed region called an \emph{absorbing set} \cite{BarbozaC-2011}. For the corresponding set of parameters system \eqref{SysShap6} has a global attractor.

Using the ideas presented in \cite{BoichenkoLR-2005, Leonov-2018-UMZh, Smith-1986}, we can prove the following

% \begin{lemma}\label{lemma1}
% If $b \leq 2 c$, then for any solution of system \eqref{SysShap6}
% we have the estimate
% \begin{equation}\label{abs_set:eqstim1}
% \liminf_{t \to +\infty} \big[\, z(t) - \frac{x^2(t)}{2 c}\,\big] \geq 0.
% \end{equation}
% \end{lemma}
%{}
% \begin{proof}

\begin{lemma}\label{lemma3}
If $\gamma < 2\sigma$, then for any solution of system \eqref{SysShap6}
we have the following estimate
\begin{equation}\label{abs_set:eqstim1}
\liminf_{t \to +\infty} \big[\, z(t) - \frac{\alpha}{2\delta}x^2(t)\,\big] \geq 0.
\end{equation}
\end{lemma}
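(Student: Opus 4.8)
The plan is to exploit the auxiliary function appearing in the statement, namely $V(t) = z(t) - \frac{\alpha}{2\delta}x^2(t)$, and to show that it satisfies a linear differential inequality which forces its lower limit to be nonnegative. First I would differentiate $V$ along an arbitrary trajectory of \eqref{SysShap6}. Using $\dot z = -\gamma z + \alpha xy$ and $\dot x = -\sigma x + \delta y$, one computes
\[
\dot V = \dot z - \frac{\alpha}{\delta}\,x\,\dot x = -\gamma z + \alpha xy - \frac{\alpha}{\delta}\,x\,(-\sigma x + \delta y) = -\gamma z + \frac{\alpha\sigma}{\delta}\,x^2.
\]
The crucial point is that the two nonlinear terms $\alpha xy$ cancel exactly; this is precisely why the coefficient $\frac{\alpha}{2\delta}$ in the definition of $V$ was chosen, and it is the structural reason the method works.

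Next I would re-express the right-hand side purely in terms of $V$ and $x^2$ by substituting $z = V + \frac{\alpha}{2\delta}x^2$, obtaining
\[
\dot V = -\gamma V + \frac{\alpha}{2\delta}\,(2\sigma - \gamma)\,x^2.
\]
At this stage the hypothesis $\gamma < 2\sigma$ enters: since all parameters are positive and $x^2 \geq 0$, the second term is nonnegative, so $V$ obeys the differential inequality $\dot V \geq -\gamma V$, equivalently $\dot V + \gamma V \geq 0$.

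To conclude, I would apply the integrating factor $e^{\gamma t}$. The inequality $\dot V + \gamma V \geq 0$ means that $\frac{d}{dt}\big(e^{\gamma t}V(t)\big) \geq 0$, so $e^{\gamma t}V(t)$ is nondecreasing and hence $e^{\gamma t}V(t) \geq V(0)$ for all $t \geq 0$. This yields the pointwise bound $V(t) \geq V(0)\,e^{-\gamma t}$. Since $\gamma > 0$, the right-hand side tends to $0$ as $t \to +\infty$ irrespective of the sign of $V(0)$, and taking the lower limit gives $\liminf_{t\to+\infty}V(t) \geq 0$, which is exactly \eqref{abs_set:eqstim1}.

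As for the main obstacle: analytically there is little difficulty once the cancellation of the $xy$ terms is observed — the computation is elementary and the Gr\"onwall-type estimate is standard. The only genuine subtlety is the tacit assumption that the solution is defined for all $t \geq 0$; strictly speaking one should either invoke global forward existence separately, or note that the same inequality $\dot V \geq -\gamma V$ already prevents $V$, and hence $z$, from escaping to $-\infty$ in finite time. I would flag this point but expect it to be dispatched routinely in context.
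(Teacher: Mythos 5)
Your proposal is correct and follows essentially the same route as the paper: the paper applies the identical argument to the transformed Lorenz-like system \eqref{SysLorenz3} with the function $V = z - \frac{x^2}{2c}$ (which is a positive multiple of your $z - \frac{\alpha}{2\delta}x^2$ under the coordinate change \eqref{TrfShLor}), obtains $\dot V = -bV + (1-\frac{b}{2c})x^2 \geq -bV$ under $b < 2c$ (i.e.\ $\gamma < 2\sigma$), and concludes via the same Gr\"onwall-type estimate before transforming back. Your remark about global forward existence is a fair observation; in the paper this is settled separately by the dissipativity result of Theorem~\ref{theorem:shapovalov:dissip}.
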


\begin{proof}

For system \eqref{SysLorenz3} and the Lyapunov function
\[
    V(x,z) = z - \frac{x^2}{2 c}
\]
we have
\[
    \dot{V}(x(t),z(t)) = - b \, V(x(t),z(t)) + \left(1 - \frac{b}{2c}\right) x^2(t).
\]
If $b < 2 c$ then
\[
    V(x(t),z(t)) \geq \exp(-bt) \, V(x(0),z(0)),
\]
that yields estimate \eqref{abs_set:eqstim1}.
Thus, the global attractor is located in the positive invariant set
representing a parabolic cylinder (Fig.~\ref{fig:shapovalov:absset-all})
\begin{equation}\label{invariant_set1}
  \Omega_1 = \left\{(x,y,z) \in \mathbb{R}^3 ~|~  z \geq \frac{x^2}{2 c} \right\}.
\end{equation}

Using \eqref{InvtrfLorSh} we obtain the statement of Lemma for system \eqref{SysShap6}.
\end{proof}

% \begin{figure}[ht]
%   \centering
%   \includegraphics[width=0.6\linewidth]{GrShapAttCyF}
%   \caption{Localization of the chaotic attractor of system~\eqref{SysLorenz3}
%   with parameters set at $r = 7.3171$, $b = 2.0909$, $c =3.7273$ by the
%   parabolic cylinder $\Omega_1$.}
%   \label{fig:shapovalov:attcyl}
% \end{figure}

\begin{theorem}\label{theorem:shapovalov:dissip}
If $\gamma > 2\mu$ and $\gamma < 2\sigma$, then all solutions of system \eqref{SysShap6} eventually fall into a bounded closed set.
% \footnote{The \emph{dissipativity} property is important when
% proving the existence of a bounded global attractor for a dynamical system
% and gives an analytical localization of the global attractor in the phase space.
% The dissipativity of a system, on one hand, proves that there are no
% trajectories that tend to infinity as $t \to +\infty$
% in the phase space and,
% on the other hand, can be used one to determine the boundaries of the domain
% that all trajectories enter within a finite time.

% \begin{definition}
% A set $\mathcal{B}_0 \subset \mathbb{R}^3$ is said to be \emph{absorbing}
% for dynamical system $\dot x = f(x), \, t\in \mathbf{R},  \, x\in \mathbb{R}^{n}$
% if for any $u_0 \in \mathbb{R}^3$ there exists
% $T = T(u_0)$
% such that $u(t, u_0) \in \mathcal{B}_0$ for any $t \geq T$.
% \end{definition}

% \begin{definition}[\cite{Chueshov-1993,Chueshov-2002-book}]\label{def:abs_set} %[\cite{Chueshov-1993,Chueshov-2002}]
% Dynamical system $\dot x = f(x), \, t\in \mathbf{R},  \, x\in \mathbb{R}^{n}$ is called \emph{(pointwise) dissipative}.
% \end{definition}}
%If $b > 2$ and $b < 2 c$, then dynamical system \eqref{SysShap6} is dissipative.
\end{theorem}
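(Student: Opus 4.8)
The plan is to transfer the entire question to the Lorenz-like form \eqref{SysLorenz3}, where the two hypotheses read $b>2$ (from $\gamma>2\mu$) and $b<2c$ (from $\gamma<2\sigma$); together they force $c>1$. Since the change of variables \eqref{InvtrfLorSh} is a linear rescaling of the phase space combined with a positive rescaling of time, it sends bounded sets to bounded sets and preserves the behaviour of trajectories as $t\to+\infty$; hence it suffices to show that every solution of \eqref{SysLorenz3} eventually enters a bounded closed set. The first ingredient is Lemma~\ref{lemma3}: under $b<2c$ every solution is eventually trapped in the positively invariant parabolic cylinder $\Omega_1=\{z\ge x^2/(2c)\}$ of \eqref{invariant_set1}, which already yields the one-sided bound $z\ge x^2/(2c)\ge 0$ and, crucially, couples the size of $z$ to that of $x$.

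The second ingredient I would seek is an absorbing ball, i.e. a radially unbounded function $V$ with $\dot V\le-\alpha V+\beta$ for some $\alpha>0$; Gr\"onwall's inequality then gives $\limsup_{t\to+\infty}V\le\beta/\alpha$, and the desired bounded closed set is the intersection of the resulting sublevel set with $\Omega_1$. The natural first candidate is the Lorenz energy $V=\tfrac12\big(x^2+y^2+(z-\rho)^2\big)$. Differentiating along \eqref{SysLorenz3}, the cubic terms $-xyz$ and $+xyz$ cancel, and the choice $\rho=c+r$ removes the $xy$ cross term, leaving $\dot V=-c\,x^2+y^2-b\big(z-\tfrac{\rho}{2}\big)^2+\tfrac{b\rho^2}{4}$.

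Here lies the main obstacle, and it is precisely the feature highlighted right after \eqref{SysLorenz3}: the coefficient of $y^2$ is $+1$ (in the classical Lorenz system it is $-1$). Consequently $\dot V$ is positive along the $y$-axis, so this $V$ is not negative outside any ball and cannot, by itself, furnish an absorbing set. The obstruction is in fact structural: requiring the cubic terms of a \emph{quadratic} Lyapunov function to cancel forces the $y^2$- and $z^2$-coefficients to coincide and all of the $xy$-, $xz$-, $yz$-cross terms to vanish, which pins the $y^2$-coefficient to a positive value; hence no purely quadratic $V$ can work. The tension is sharp: removing the destabilising $+y$ (for instance by passing to $u=y-x$, which turns $+y^2$ into $(1-c)u^2<0$ \emph{exactly because} $c>1$) destroys the cubic cancellation and reintroduces an $x^2z$ term.

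The resolution I would pursue is therefore to work on $\Omega_1$ and to spend the two parameter margins there. Using $c>1$, I would add a correction of the type $\tfrac{\beta}{2}(y-x)^2$ to $V$ so as to make the quadratic part negative definite in $x,y$, and then control the surviving cubic cross-terms ($x^2z$ and $xyz$) by trading $x^2$ for $z$ through the cylinder inequality $x^2\le 2cz$ valid on $\Omega_1$, while the strong dissipation $b>2$ supplies the $-bz^2$ budget needed to absorb them and to close the estimate $\dot V\le-\alpha V+\beta$; the negativity of the divergence $1-c-b<0$ makes the existence of such an absorbing set plausible from the outset. The delicate and decisive step is this cubic bookkeeping: the correction term does not by itself cancel the cubic contributions, so the argument succeeds only if the confinement $\Omega_1$ and the inequalities $2<b<2c$ are used in a sharp, coordinated way, and this is where I expect the real work to lie. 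Once the differential inequality holds, boundedness for \eqref{SysLorenz3} follows, and pulling back along \eqref{InvtrfLorSh} gives the statement for \eqref{SysShap6}.
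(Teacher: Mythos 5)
Your overall strategy coincides with the paper's: pass to \eqref{SysLorenz3} (where the hypotheses become $b>2$, $b<2c$), use Lemma~\ref{lemma3} to confine trajectories to the cylinder $\Omega_1$, perturb the naive quadratic Lyapunov function to remove the destabilising $+y^2$ in $\dot V$, absorb the surviving cubic terms via $x^2\le 2cz$ against the $-bz^2$ budget, and complete the square to obtain an absorbing ellipsoid. However, the specific correction you propose, $\tfrac{\beta}{2}(y-x)^2$, fails at exactly the step you defer as ``the real work.'' Adding it raises the $y^2$-coefficient of $V$ to $\tfrac{1+\beta}{2}$ while the $(z-z_0)^2$-coefficient remains $\tfrac12$; since the $xyz$ contributions of these two blocks cancel only when their coefficients are equal, $\dot V$ acquires an uncancelled term $-\beta\, xyz$ (in addition to the benign $+\beta\, x^2 z$). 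On $\Omega_1$ one has only $|x|\le\sqrt{2cz}$, so $|xyz|\le\sqrt{2c}\,|y|\,z^{3/2}$ is of degree $5/2$ in $(y,z)$ and cannot be dominated by the quadratic terms $-y^2$, $-z^2$ that are available; no choice of $\beta$ closes the differential inequality. Your structural claim is also slightly off: an $xy$ cross term in $V$ generates only the cubic $x^2z$ (not $xyz$), which is precisely the absorbable kind, so there is no need to exclude it.

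The paper's resolution is to perturb only the $x^2$ and $xy$ entries while pinning the $y^2$-coefficient to that of the $z$-block: $V=\tfrac12\bigl[Ax^2-2Bxy+y^2+\bigl(z-(r+(A+B)c-B)\bigr)^2\bigr]$ with $A>B^2$ for radial unboundedness. This preserves the $xyz$ cancellation, the shift $z_0=r+(A+B)c-B$ kills the $xy$ term in $\dot V$, and the only surviving cubic is $Bx^2z=2Bcz\cdot\tfrac{x^2}{2c}\le 2Bcz^2$ on $\Omega_1$, giving $\dot V\le -(Ac+Br)x^2-(Bc-1)y^2-(b-2Bc)z^2+b z_0 z$. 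Choosing $B\in\bigl(\tfrac{1}{c},\tfrac{b}{2c}\bigr)$ --- a nonempty interval precisely because $b>2$ --- makes all three quadratic coefficients negative, and completing the square in $z$ yields the absorbing ellipsoid. So your plan needs this one repair; as written, the proposed Lyapunov function does not yield the estimate $\dot V\le-\alpha V+\beta$.
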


\begin{proof}
For system \eqref{SysLorenz3} we consider the Lyapunov function
\begin{equation}\label{lyap_func_1}
 V(x,y,z) = \frac{1}{2} \left[ A x^2 - 2 \, B \,x \, y + y^2 + \big(z - \big(r + (A + B) c - B\big)\big)^2 \right],
\end{equation}
where $A$ and $B$ are arbitrary positive parameters.

{\it Case 1.} If $A > B^2$ then
\begin{equation}\label{lyap_func_inf}
V(x,y,z) = \frac{1}{2} \left[ A (x - \tfrac{B}{A} y)^2 + (1 - \tfrac{B^2}{A}) y^2
+ \big(z - \big(r + (A + B) c - B\big)\big)^2 \right] \to \infty
\end{equation}
as $|(x,y,z)| \to \infty$.

For an arbitrary solution $u(t) = (x(t),y(t),z(t))$
of system \eqref{SysLorenz3} by Lemma~\ref{lemma3} we have 
\begin{align*}
 \dot{V}(x,y,z) = &- (A c + B r) x^2 - (B c - 1) y^2
 - (b - 2 B c) z^2 + (r + (A + B) c - B)  b z - { 2 \,c \,B \,z \big(z - \tfrac{x^2}{2 c} \big)}\\
\leq&
 - (A c + B r) x^2 - (B c - 1) y^2
 - (b - 2 B c) z^2 + (r + (A + B) c - B)  b z .
\end{align*} 

In order to have $B c - 1 > 0$, $b - 2 B c > 0$,
we choose $B \in \big(\frac{1}{c}, \frac{b}{2 c}\big)$
under the assumptions $b > 2$, $ b < 2 c$.
Suppose that $\varepsilon \in \big(0, \, b - 2 B c\big)$ and
$\lambda = \min\big\{A c + B r, \, B c - 1, \, (b - 2 B c)-\varepsilon\big\} > 0$.
Then
\begin{align*}
\dot{V}(x,y,z) = & - (A c + B r) x^2 - (B c - 1) y^2
 - (b - 2 B c - \varepsilon) z^2 - \varepsilon z^2 + (r + (A + B) c - B) b z \\
 = & - (A c + B r) x^2 - (B c - 1) y^2 - (b - 2 B c - \varepsilon) z^2 -
\left(\sqrt{\varepsilon} z - \frac{(r + (A + B) c - B)  b}{2\sqrt{\varepsilon}}\right)^2 \\
& + \frac{(r + (A + B) c - B)^2 \, b^2}{4\varepsilon}
\leq - \lambda (x^2 + y^2 + z^2) + \frac{(r + (A + B) c - B)^2 \, b^2}{4\varepsilon}.
\end{align*}

Suppose that $x^2 + y^2 + z^2 \geq R^2$.
Then a positive $\varkappa$ exists such that
\[
  \dot{V}(x,y,z) \leq - \lambda R^2 + \frac{(r + (A + B) c - B)^2 \, b^2}{4\varepsilon} < -\varkappa \quad
  \text{for } \quad R^2 > \frac{1}{\lambda} \frac{(r + (A + B) c - B)^2 \, b^2}{4\varepsilon}.
\]
We choose $\eta > 0$ such that
\[
  \left\{(x,y,z)~|~ V(x,y,z) \leq \eta \right\} \supset
  \left\{(x,y,z)~|~ x^2 + y^2 + z^2 \leq R^2 \right\}.
\]
Thus, the relation $x^2 + y^2 + z^2 \leq R^2$ implies that
\[
\begin{aligned}
& A (x - \tfrac{B}{A} y)^2 + (1 - \tfrac{B^2}{A}) y^2 + \big(z - \big(r + (A + B) c - B\big)\big)^2 = \\
& A (x - \tfrac{B}{A} y)^2 + (1 - \tfrac{B^2}{A}) y^2 + z^2 - 2 \big(r + (A + B) c - B\big) z
+ \big(r + (A + B) c - B\big)^2 \leq 2\eta.
\end{aligned}
  % x^2 + y^2 + (a+1) \left(z - \frac{\sigma + r}{a + 1} \right)^2 =
  % x^2 + y^2 + z^2 + a z^2 - 2(\sigma+r)z + \frac{(\sigma+r)^2}{a+1} \leq 2\eta.
\]
According to the Cauchy-Bunyakovsky-Schwarz inequality, we have

\[
  A (x - \tfrac{B}{A} y)^2  ~ \leq ~ 2 A (x^2 + \tfrac{B^2}{A^2} y^2)  ~ \leq ~ 2 A (1 + \tfrac{B^2}{A^2}) R^2
\]
and, since
\[
  - 2 \big(r + (A + B) c - B\big) z ~ \leq ~ 2 \big(r + (A + B) c - B\big) |z| ~
  \leq ~ 2 \big(r + (A + B) c - B\big) R,
\]
it is sufficient to choose
\[
\eta \geq \frac{1}{2} \left[
  (2 A + 2 + \tfrac{B^2}{A}) R^2 +
  2 \big(r + (A + B) c - B\big) R + \big(r + (A + B) c - B\big)^2\right].
\]
% Depending on the relationship between parameters $A$ and $B$, we can obtain
% positive invariant sets $\left\{(x,y,z)~|~ V(x,y,z) \leq \eta \right\}$ of different geometric forms.
If $R$ and $\eta$ are chosen as shown above, then system \eqref{SysLorenz3} has the following compact ellipsoidal absorbing set:
\[
  \mathcal{B}_0 = \left\{\!(x,y,z)\,\big|\,
  V(x,y,z) = \frac{1}{2} \left[ A x^2 - 2 \, B \,x \, y + y^2
    + \big(z - \big(r + (A + B) c - B\big)\big)^2 \right] \leq \eta \! \right\}.
\]

{\it Case 2.} If $A = B^2$ then, using technics from {\it Case~1}, we attain the attractor of dynamic system \eqref{SysShap6} is located in the positive invariant set representing an elliptic cylinder: 
 \[ \Omega_2 = \left\{\!(x,y,z)\,\big|\,
  V(x,y,z) = \frac{1}{2} \left[ B^2\big(x - \tfrac{1}{B} \, y\big)^2 \, 
    + \big(z - \big(r + B^2 c + B(c-1)\big)\big)^2 \right] \leq \eta \! \right\}.
\]

Condition \eqref{lyap_func_inf} holds for all $(x,\,y,\,z)$ except for line $x=\tfrac{1}{B}y$ which is symmetry axis of the elliptic cylinder $\Omega_2$.

Hence, we obtain a number of the following positive invariant sets: the absorbing set $\mathcal{B}=\Omega_1 \bigcap \mathcal{B}_0$ and the elliptic cylinder $\Omega_2$ (Fig.~\ref{fig:shapovalov:absset-all}).

\begin{figure}[ht]
  \centering
  \includegraphics[width=0.7\linewidth]{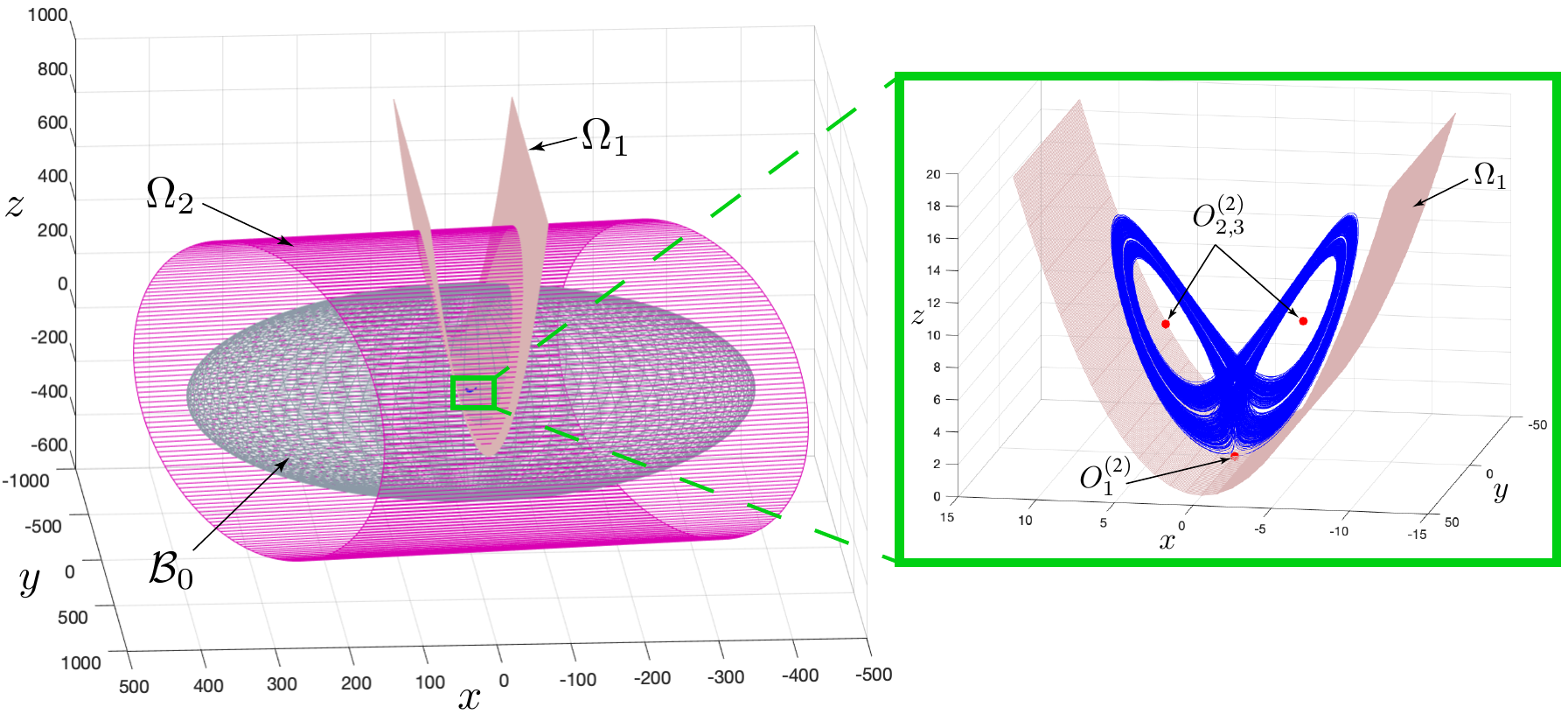}
  \caption{Localization of the chaotic attractor of system~\eqref{SysLorenz3}
  with parameters set at $r = 7.3171$, $b = 2.0909$, $c =3.7273$ by the
  absorbing set $\mathcal{B}=\Omega_1 \bigcap \mathcal{B}_0$, where the 
  $\Omega_1$ (parabolic cylinder, brown), $\Omega_2$ (elliptic cylinder, pink), $\mathcal{B}_0$ (ellipsoid, gray). Here $B=\tfrac{1}{2}\big(\tfrac{1}{c} + \tfrac{b}{2c}\big)=0.1167$, $A=B^2=0.0136$, $\varepsilon=\tfrac{1}{2}\big(b - 2 B c\big)=0.1667$, and $R$, $\eta$ are chosen as in the proof of Theorem~1.}  
  \label{fig:shapovalov:absset-all}
\end{figure}
Using \eqref{InvtrfLorSh} we obtain the statement of Theorem for system \eqref{SysShap6}.
Thus, system \eqref{SysShap6} generates a dynamic system, the solutions of system \eqref{SysShap6} exist for $t\in [0,+\infty)$ and system \eqref{SysShap6} possesses a global attractor, which contains all equilibria and nontrivial (local) attractors (see, e.g. \cite{Chueshov-2002-book,LeonovKM-2015-EPJST}).
\end{proof}

The absorbing set can be further refined following, e.g. \cite{liao2008absolute,ZhangFLiaoAll-2017}.

The problem of forecasting for established (limiting) behaviour of the dynamic system can be solved via localization of attractors of this system \cite{LeonovKM-2015-EPJST}. While trivial attractors (stable equilibrium points) can be easily found analytically or numerically, the search for periodic and chaotic attractors can be a challenging problem. For numerical localization of an attractor, one needs to choose an initial point in the basin of attraction and observe how the trajectory, starting from this initial point, visualizes the attractor after a transient process. Self-excited attractors, even those co-existing in the case of multistability \cite{PisarchikF-2014}, can be revealed numerically by the integration of trajectories, started in small neighbourhoods of unstable equilibria, while hidden attractors have the basins of attraction, which are not connected with equilibria and are ``hidden somewhere'' in the phase space \cite{DancaK-2017-CSF,Kuznetsov-2016,KuznetsovL-2014-IFACWC,LeonovK-2013-IJBC,LeonovKM-2015-EPJST}. 
Note that in numerical computation of a trajectory over a finite-time interval, it is difficult to distinguish sustained chaos from transient chaos (a transient chaotic set in the phase space, which can persist for a long time) \cite{GrebogiOY-1983}. 
For instance, for the classical Lorenz system the time interval for reliable computation with 16 significant digits and error $10^{-4}$ is estimated as [0,36], and reliable computation for a longer time interval, e.g. [0,10000] in \cite{LiaoW-2014}, is a challenging task that requires a significant increase in the precision of the floating-point representation and the use of supercomputers\footnote{Calculations by a supercomputer take about two weeks}.
Analytical aspects of this problem are concerned with the so-called shadowing theory (see, e.g. \cite{Pilyugin-2006}) which for some classes of systems can guarantee the existence of the ``true'' \,trajectory in the vicinity of its approximation.
Analytical procedures for determining the global stability areas are able to mitigate the influence of computer errors and thus make reliable forecasts of system dynamics.

\section{Global stability}

From the mathematical point of view, the problem when searching for the conditions for the  ultimate stationary behavior of the mid-size firm model \eqref{SysShap6} corresponds to the tendency of all its trajectories toward a stationary set. The nonlinearity of this system and the presence of one unstable equilibria $O_1(0,0,0)$ for all values of its parameters makes this problem nontrivial. Furthermore, for certain values of the parameters of the system this problem has a negative solution, and chaotic behavior is observed in system \eqref{SysShap6} \cite{ShapovalovKBA-2004}. We use specialized analytical methods to effectively obtain the analytical conditions for global stability of system \eqref{SysShap6} when all its trajectories tend to equilibria.

We write system \eqref{SysShap6} in general form

\begin{equation}
\dot u=f(u), \, f: \, U \subseteq \mathbb{R}^n \rightarrow \mathbb{R}^n ,
\label{SysTotal} 
\end{equation}
and consider its linearization along the solution $u(t,u_0)$ such that $u(0,u_0)=u_0 \in U$ exists for $t\in [0,\infty)$

\begin{equation}
\dot q=J(u(t,u_0))q,
\label{SysTotalLin} 
\end{equation}
where $f$ is continuously differentiable vector-function, $J(u)=Df(u)$ is the $n \times n$ Jacobian matrix, $\det J(u)\ne 0, \, \forall u\in U$. 

To check the global stability of system \eqref{SysTotal}, following \cite{Leonov-1991,Leonov-1991-Vest,Smith-1986}, it suffices to estimate the ordered eigenvalues of the symmetrized Jacobian matrix of this system

\begin{equation}
\frac{1}{2}\left(J(u(t,u_0))+J(u(t,u_0))^*\right) .
\label{SymJacobian} 
\end{equation}
For original nonlinear system \eqref{SysTotal}, computing the eigenvalues of the Jacobian matrix is technically difficult. Therefore, we use the linear coordinate transformation of system \eqref{SysTotal} with a nonsingular $n \times n$ matrix $S$

\begin{equation}
w=Su,
\label{TrfSlin} 
\end{equation}
with which it reduces to a transformed system

\begin{equation}
\dot w=Sf(S^{-1}w) .
\label{TrfSysTotal} 
\end{equation}
Its linearization along the solution $w(t,w_0)=S u(t,u_0)$ has the form

\begin{equation}
\dot v=S J(u(t,u_0)) S^{-1} v.
\label{SysTotalLinS} 
\end{equation}
The transformation \eqref{TrfSlin} often allows one to compactly write out and order the eigenvalues of the symmetrized Jacobian matrix of  system \eqref{TrfSysTotal}

\begin{equation}
\frac{1}{2}\left(SJS^{-1}+(SJS^{-1})^*\right)
\label{SymJacobianS} 
\end{equation}
and obviously does not affect system stability property \eqref{SysTotal}. 

 Thus, following \cite{Leonov-1991,Leonov-1991-Vest}, we can estimate the relation of the sum of the first two eigenvalues of the symmetrized matrix together with the derivative of the function $V(u)$. This allows us to effectively estimate the partial sum of the eigenvalues, while matrix $S$ is used to simplify calculation of the eigenvalues. Thus, we need the following relation

\begin{equation}
\lambda_1\left(u,S\right)+\lambda_2\left(u,S\right)+ \dot{V}\left(u\right)<0
\label{FormTh2} 
\end{equation}
for all $u$ from the absorbing set $\mathcal{B}=\Omega_1 \bigcap \mathcal{B}_0$.

% \[
%  \dot u = f(u), \quad J(u)=Df(u) \quad w=h(u), \ u = h^{-1}(w)
%  \]
%  \[
%    D_uh(u) = p(u)S = e^{\frac{1}{2}V(u)}S
%  \]
%  \[
%    D_wh^{-1}(w) = (D_uh(u))^{-1} = (e^{\frac{1}{2}V(h^{-1}(w))}S)^{-1}
%  \]
%  \[
%    D_wh^{-1}(w) \dot w = f(h^{-1}(w)) = 
%  \]
%  \[
%   \dot w = D_uh(u)f(h^{-1}(w))
%   =e^{\frac{1}{2}V(h^{-1}(w))}Sf(h^{-1}(w))=g(w)
%  \]
%  \[
%    D_wg(w) = 
%    D_w(\frac{1}{2}V(h^{-1}(w)))e^{\frac{1}{2}V(h^{-1}(w))}S+
%    e^{\frac{1}{2}V(h^{-1}(w))}S\, D_w (f(h^{-1}(w)))
%  \]
%  \[
%    D_wg(w) = 
%    \frac{1}{2}\dot V(h^{-1}(w)) D_wh^{-1}(w)e^{\frac{1}{2}V(h^{-1}(w))}S+
%    e^{\frac{1}{2}V(h^{-1}(w))}S\, J(h^{-1}(w)) D_w(h^{-1}(w))
%  \]
%  \[
%    D_wg(w) = 
%    \frac{1}{2}\dot V(h^{-1}(w)) D_wh^{-1}(w)e^{\frac{1}{2}V(h^{-1}(w))}S+
%    e^{\frac{1}{2}V(h^{-1}(w))}S\, J(h^{-1}(w)) (e^{\frac{1}{2}V(h^{-1}(w))}S)^{-1}
%  \]

%  \[
%    D_wg(w) = 
%    \frac{1}{2}\dot V(h^{-1}(w)) I+
%    S J(h^{-1}(w)) S^{-1}
%  \]
%  \[
%   \lambda_i (\frac{1}{2}(D_wg(w)+D_wg(w)^*)) = 
%   \frac{1}{2}\dot V(h^{-1}(w)) +  
%   \lambda_i ((\frac{1}{2}(S J(h^{-1}(w)) S^{-1} + (S J(h^{-1}(w)) S^{-1})^*)
%  \]

% \[
%   \lambda_1(w)+\lambda_2(w) = \dot V(u) + 
%   \lambda_1(u,S)+\lambda_2(u,S)
% \]

\begin{theorem}\label{theorem:shapovalov:stability}
If for system \eqref{SysShap6} relations

\begin{equation}
\left\{
  \begin{aligned}
(  \gamma -\sigma )\left(  \frac{\gamma}{\mu} +1\right) <   \delta < ( \gamma + \sigma)\left(  \frac{\gamma}{\mu} -  1\right) , \\
    2 \mu < \gamma < 2 \sigma , \\
    \end{aligned}
\right. 
\label{CondStabFin_AbsSet1} 
\end{equation}
% \begin{equation}
% \left\{
%   \begin{aligned}
% -(\sigma - \gamma)\left(\frac{\gamma}{\mu} +1 \right ) <  \delta < (\sigma + \gamma)\left(\frac{\gamma}{\mu} - 1\right ) \\
%    \frac{\sigma}{\mu}  + \frac{2\gamma}{\mu} >1 \\
%     \frac{\gamma}{\mu} > 1
%   \end{aligned}
% \right.,
% \label{CondStabParShap} 
% \end{equation}
hold, then any solution tends to some equilibrium state for $t \to + \infty$.
\end{theorem}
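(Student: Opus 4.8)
The plan is to carry out the analysis for the Lorenz-like form \eqref{SysLorenz3} and transfer the conclusion back to \eqref{SysShap6} through the inverse substitution \eqref{InvtrfLorSh}, exactly as in the preceding proofs. First I would rewrite the hypotheses \eqref{CondStabFin_AbsSet1} in the $(c,r,b)$ parameters: the second line becomes $2<b<2c$, while the first line becomes $(b-c)(b+1)<rc<(b-1)(b+c)$. The inequalities $2<b$ and $b<2c$ are precisely what Theorem~\ref{theorem:shapovalov:dissip} and Lemma~\ref{lemma3} require, so under \eqref{CondStabFin_AbsSet1} the parabolic cylinder $\Omega_1$ is positively invariant and \eqref{SysLorenz3} admits a compact absorbing set $\mathcal{B}_0$. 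Consequently every trajectory is bounded, eventually enters $\mathcal{B}_0\cap\Omega_1$, and its $\omega$-limit set is a nonempty, compact, connected, invariant subset of that region. Since the equilibria \eqref{StPointShap} are isolated and an $\omega$-limit set is connected, it suffices to show that every such $\omega$-limit set consists only of equilibria; convergence to a single equilibrium then follows.

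For the reduction to the equilibrium set I would invoke the Barbashin--Krasovskii--LaSalle invariance principle, which needs a Lyapunov function $V$ with $\dot V\le 0$ on $\mathcal{B}_0$. Respecting the symmetry $(x,y,z)\mapsto(-x,-y,z)$ of \eqref{SysLorenz3} (which forbids cross terms $xz$, $yz$), the natural candidate is the family already used in Theorem~\ref{theorem:shapovalov:dissip},
\[
 V(x,y,z)=\tfrac12\big[\,A x^2-2B x y+y^2+(z-z_0)^2\,\big],
\]
with positive parameters $A,B,z_0$ still to be fixed. Computing $\dot V$ along \eqref{SysLorenz3}, the cubic term $xyz$ cancels identically, while the residual cubic $Bx^2z$ is dominated on $\Omega_1$ via the estimate $x^2\le 2cz$ from Lemma~\ref{lemma3}, so that $\dot V$ is bounded above by a quadratic form $Q$. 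Choosing $B\in\big(\tfrac1c,\tfrac{b}{2c}\big)$, an interval nonempty because $b>2$, makes the coefficient of $y^2$ negative and keeps the coefficient of $z^2$ negative (the cylinder estimate itself uses $b<2c$); this is the step that tames the destabilizing non-classical term $+y$ in the second equation.

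The decisive step is to fix $A$ and $z_0$ so that $Q$ is negative semidefinite, i.e. so that $\dot V\le 0$ holds genuinely on $\mathcal{B}_0\cap\Omega_1$ and not merely outside a ball; I would expect the resulting definiteness requirements to reproduce exactly the first line of \eqref{CondStabFin_AbsSet1}, that is $(b-c)(b+1)<rc<(b-1)(b+c)$. With $\dot V\le 0$ in hand, the invariance principle forces $V$ to be constant on each $\omega$-limit set, whence $\dot V\equiv 0$ there; it then remains to identify the zero set $\{\dot V=0\}$ and to show that the only complete bounded trajectories contained in it are the equilibria $O_1^{(1)}$, $O_{2,3}^{(1)}$. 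Applying \eqref{InvtrfLorSh} transfers the statement to \eqref{SysShap6}.

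The main obstacle is exactly this decisive step. Defeating the $+y$ term and absorbing the cubic through the cylinder already pins $b$ to $(2,2c)$; the genuine difficulty is that the shift $z_0$ needed to cancel the $xy$ cross term reintroduces a linear-in-$z$ contribution, and completing the square in $z$ leaves a strictly positive constant. This is precisely why the quadratic $V$ of Theorem~\ref{theorem:shapovalov:dissip} by itself yields only dissipativity ($\dot V<0$ outside a ball) rather than global stability. Upgrading this to a true $\dot V\le 0$ is where \eqref{CondStabFin_AbsSet1} must be used sharply, and I anticipate it requires either a non-quadratic correction to $V$ or a sharper localizing set that cancels the leftover constant; the final delicate point is verifying that $\{\dot V=0\}$ supports no nontrivial recurrence (no coexisting periodic or chaotic invariant motion), consistent with \eqref{CondStabFin_AbsSet1} being strictly stronger than the local stability condition of Lemma~\ref{lemma2}.
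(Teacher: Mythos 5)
Your reduction to system \eqref{SysLorenz3}, your translation of \eqref{CondStabFin_AbsSet1} into $2<b<2c$ and $(b+1)(b-c)<rc<(b+c)(b-1)$, and your use of Lemma~\ref{lemma3} and Theorem~\ref{theorem:shapovalov:dissip} to obtain a compact absorbing set inside the cylinder $\Omega_1$ all match the paper. But the core of your argument --- making the quadratic function $V$ of Theorem~\ref{theorem:shapovalov:dissip} satisfy $\dot V\le 0$ on the whole absorbing set and then invoking the LaSalle invariance principle --- is never carried out, and you yourself identify why it cannot be closed in that form: completing the square in $z$ leaves a strictly positive constant, so that $V$ yields only dissipativity. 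Announcing that the missing definiteness conditions ``should reproduce'' the first line of \eqref{CondStabFin_AbsSet1} is not a proof, and the final LaSalle step (showing that $\{\dot V=0\}$ carries no nontrivial bounded invariant motion) is likewise left open. As written, the argument establishes nothing beyond what Theorem~\ref{theorem:shapovalov:dissip} already gives.

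The paper closes this gap by a different mechanism, following Smith and Leonov: instead of a scalar Lyapunov function decreasing along trajectories, it controls the sum of the two largest eigenvalues $\lambda_1+\lambda_2$ of the symmetrized Jacobian $\tfrac12\bigl(SJS^{-1}+(SJS^{-1})^*\bigr)$ after the linear change of coordinates \eqref{MatrixS} (whose reality requirement \eqref{CondA} is exactly your lower bound $rc>(b+1)(b-c)$), corrected by the derivative of an auxiliary function $V=\theta/\bigl[(c+1)^2+4cr\bigr]^{1/2}$ in which $\theta$ is \emph{quartic} --- it contains $x^4$ and $x^2z$ terms --- and is chosen so that $W+\dot\theta\le 0$ on the cylinder $z\ge x^2/(2c)$. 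The resulting inequality $\lambda_1+\lambda_2+\dot V<0$, valid under \eqref{CondStabF1}--\eqref{CondStabF2}, means that two-dimensional areas contract in the modified metric; by the Leonov--Smith convergence theorem this excludes periodic orbits and any nontrivial recurrence inside the absorbing set and forces every solution to tend to an equilibrium. In particular the upper bound $rc<(b+c)(b-1)$ comes from \eqref{CondStabF1}, not from positive semidefiniteness of a quadratic form. To complete your proof you would need either this area-contraction argument or some device strictly stronger than the quadratic $V$ you propose.
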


\begin{proof}

Consider system \eqref{SysLorenz3} obtained from \eqref{SysShap6} by changing variables \eqref{TrfShLor}.
For system \eqref{SysLorenz3} the eigenvalues of Jacobian matrix \eqref{JacLor} have cumbersome expressions. So we apply the transformation \eqref{TrfSlin} with a nonsingular matrix $S$

\begin{equation}
 S=\begin{pmatrix}\frac{-1}{a} &0 &0\\
-\frac{  b  + 1}{  c} &1 &0\\
0& 0& 1\end{pmatrix}
\label{MatrixS} 
\end{equation}
to this system, where $a=\frac{c}{\sqrt{\left(1+  b\right)\left(  c-  b\right)+ r   c}}$, and condition 

\begin{equation}
r  c > (  b +1)(  b -  c) 
\label{CondA} 
\end{equation}
is satisfied.
Then the symmetrized Jacobian matrix of this system $\frac{1}{2}\left(SJS^{-1}+(SJS^{-1})^*\right)$ has the following eigenvalues

\begin{equation}
\lambda_2=-  b, \, \lambda_{1,3}=-\frac{  c -1}{2}\pm \frac{1}{2}\left( (2  b +1 -   c)^2 + a^{2}\left(\frac{  b +1}{  c}x+y\right)^2 + \left(a z-\frac{2  b}{a}\right)^2\right)^\frac{1}{2}.
\label{EigValMS} 
\end{equation}
The inequalities

\begin{equation}
2(\lambda_1 - \lambda_2) \ge 2 b +1- c  + | 2b + 1 - c|\ge 0
\end{equation}
imply $\lambda_1\ge\lambda_2$. From \eqref{EigValMS} we obtain the ratio

\begin{equation}
2(\lambda_1 + \lambda_2 )=-(  c -1 + 2  b)  + \left( (2  b +1 -   c)^2 + a^{2}\left(\frac{  b +1}{c}x+y\right)^2 + \left(a z-\frac{2  c}{a}\right)^2\right)^\frac{1}{2}. 
\label{SumLambda2} 
\end{equation}
Using the famous inequality $\sqrt{k+l} \le \sqrt{k} + \frac{l}{2\sqrt{k}}$, $\forall k>0, \, l\ge 0$, we obtain an estimate

\begin{equation}
\begin{aligned}
&2(\lambda_1 + \lambda_2)\le -(  c -1 + 2  b)  + \left[ (c+1)^2+4  c r\right]^{\frac{1}{2}}+ \\
&+\frac{2}{\left[ (  c+1)^2+4  c r\right]^\frac{1}{2}} 
\left[-  c z+\frac{a^2 z^2}{4} +\frac{a^2}{4}
\left(\frac{b + 1}{c}x+y\right)^2\right]. \end{aligned} 
\label{ESumLambda2} 
\end{equation}

We introduce the function $V(x,y,z)=\frac{\theta(x,y,z)}{\left[ (c+1)^2+4  c r\right]^{\frac{1}{2}}}$, where 

\begin{equation}
\theta(x, y, z) = h_1 x^2 + (- b h_2 + h_3) y^2 + h_3 z^2 + \frac{h_2}{4b} x^4 - h_2 x^2 z - h_2 h_4 xy -  \frac{c}{b}z,
\end{equation}
$h_j \, (j=\overline{1,4})$ are some positive real parameters. 
Then

\begin{equation}
\begin{aligned}
&2(\lambda_1 + \lambda_2)+2\dot V \le -(  c -1 + 2  b) - s(  c - 1) + (1-s)\left[ (  c+1)^2+4  c r\right]^{\frac{1}{2}}+\\
&+\frac{2(1-s)}{\left[ (  c+1)^2+4  c r\right]^\frac{1}{2}}\left[W(x,y,z)+\dot \theta\right],\end{aligned} 
\label{ESumLambda2V} 
\end{equation}
where $W(x,y,z)=-  c z+\frac{a^2 z^2}{4} +\frac{a^2}{4}\left(\frac{ b + 1}{c}x+y\right)^2$, and the function $\theta(x, y, z)$  is such that relation

\begin{equation}
W(x,y,z)+\dot \theta \le 0, \qquad \forall x,y,z\ge \frac{x^2}{2c} \, 
\label{Wtheta} 
\end{equation}
holds.
Using $\eqref{ESumLambda2}$, we obtain the following estimate

\begin{equation}
2(\lambda_1 + \lambda_2 + \dot V) \le -(  c -1 + 2  b) + \left[ (c +1)^2+4  c r\right]^{\frac{1}{2}}.
\label{ESumLambConst} 
\end{equation}
The relation $\lambda_1 + \lambda_2 + \dot V < 0$ is satisfied if for the parameters of system \eqref{SysLorenz3} the following condition

\begin{equation}
-(  c -1 + 2  b) + \left[ (  c+1)^2+4  c r\right]^{\frac{1}{2}} < 0
\label{CondStabF1} 
\end{equation}
holds.
This is the case if relations
\begin{equation}
\left\{
  \begin{aligned}
  c  r < (  b +  c)(  b -1), \\
    c +2  b >1, \\
    b > 1
  \end{aligned} 
\right. 
\label{CondStabF2} 
\end{equation}
are true.
From here, taking into account \eqref{CondA} and the condition of Theorem~\ref{theorem:shapovalov:dissip}, we obtain a domain of parameters

\begin{equation}
\left\{
  \begin{aligned}
(  b +1)\left(  \frac{b}{c} -1\right) <   r < \left(  \frac{b}{c} +  1\right)(  b -1) , \\
    2 < b < 2 c  \\
    \end{aligned}
\right. 
\label{CondStabFin_AbsSet2} 
\end{equation}
in which system \eqref{SysLorenz3} is globally stable.

% \begin{equation}
% \left\{
%   \begin{aligned}
% (  b +1)(  b -  c) <   c  r < (  b +  c)(  b -1) \\
%     c +2  b >1 \\
%     b > 1
%   \end{aligned}
% \right. 
% \label{CondStabFin} 
% \end{equation}
Using \eqref{InvtrfLorSh}, we obtain the statement of Theorem for system \eqref{SysShap6}.
\end{proof}

Note that, according to the Fishing principle \cite{Leonov-2013-IJBC,LeonovKM-2015-EPJST,LeonovMKM-IJBC-2020} we obtain condition $3c>2b-1$ which is a necessary and sufficient condition for the existence of a homoclinic orbit (i.e. there exists a certain $r$ such that system \eqref{SysLorenz3} has a homoclinic orbit). This condition is valid under the second condition from \eqref{CondStabFin_AbsSet2}, and for the one can use a binary search \cite{LeonovMKM-IJBC-2020}.

% $$
% (  \gamma +1)(  \gamma -  \sigma) <   {\sigma}   {r} < (  {\gamma}-1)(  {\gamma}+  {\sigma}) . \eqno(6)
% $$
\begin{figure}[ht]
  \centering
  \includegraphics[width=0.40\linewidth]{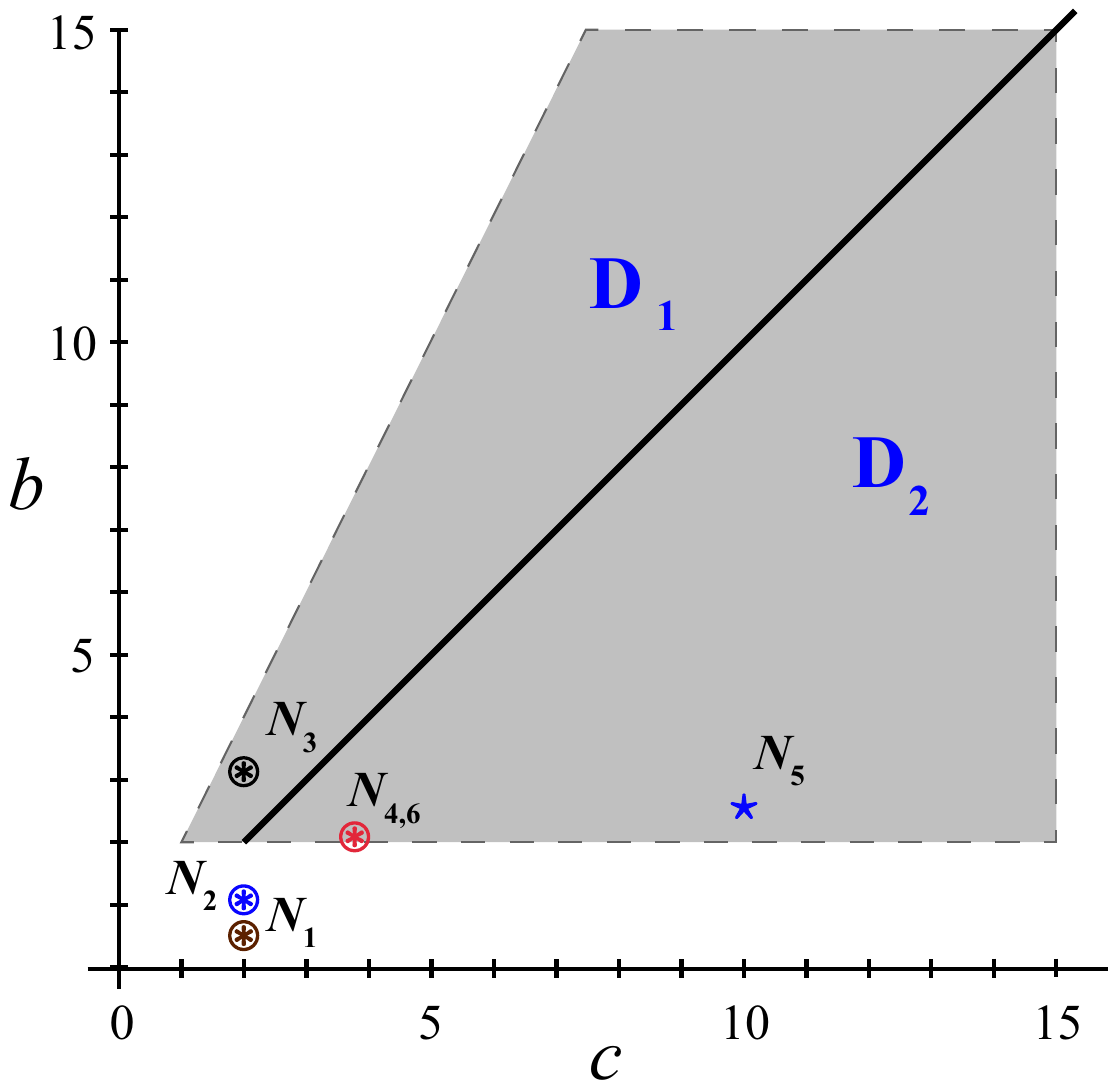}
  \caption{Parameters of system~\eqref{SysLorenz3} corresponding absorbing set $\mathcal{B}$
  and stability region \eqref{CondStabFin_AbsSet2}.}
  \label{fig:set}
\end{figure}

The Fig.~\ref{fig:set} shows the domain of values of parameters \eqref{CondStabFin_AbsSet2}, for which  system \eqref{SysLorenz3} has an absorbing set $\mathcal{B}$. At each point in this region, we can choose the value of the parameter $r$ such that if the first relation in \eqref{CondStabFin_AbsSet2} is satisfied, then system \eqref{SysLorenz3} and, accordingly, system \eqref{SysShap6}, is globally stable, otherwise, it is unstable.
The points $N_i = N_i (c, b, r), \, (i = \overline{1,5})$ correspond to the values of the parameters of system  \eqref{SysLorenz3}, at which the various behaviors are demonstrated.
So the point $N_1 (1.9524, \, 0.4762, \, 0.2439)$ corresponds to the classical Shapovalov's parameters \cite{ShapovalovKBA-2004}

\begin{equation}
\alpha=5, \, \beta=8, \, \gamma=1, \, \delta=1,\, \mu=2.1,\, \sigma=4.1,
\label{par:shapovalov} 
\end{equation}
and the point $N_2(1.9524, \, 1.1117, \, 0.2439)$ corresponds to parameters

\begin{equation}
\alpha=4, \, \beta=8, \, \gamma=2.3345, \, \delta=1,\, \mu=2.1,\, \sigma=4.1 
\label{par:gurina} 
\end{equation}
from \cite{GurinaD-2010}.
For the values of parameters \eqref{par:shapovalov} and \eqref{par:gurina} it is not possible to construct an absorbing set.
The points $N_3$, $N_4$, $N_5$ lie in the region of existence of the absorbing set. The point $N_3(2.1000, \, 3.1500, \, 4.0000) \in D_1$ corresponds to the stable regime, $D_1$ is the domain in which the inequalities $(b +1)\left(  \frac{b}{c} -1\right) <   r < \left(  \frac{b}{c} +  1\right)(b -1)$  are necessary to ensure the stability of  system \eqref{SysLorenz3}. The point $N_4(3.7273, \, 2.0909, \, 7.3171) \in D_2$ corresponds to the unstable regime, in domain $D_2$ the inequalities $(b +1)\left(  \frac{b}{c} -1\right) <   r < \left(  \frac{b}{c} +  1\right)(b -1)$ do not hold, because $r > \left(  \frac{b}{c} +  1\right)(b -1)$, that imply the instability of system \eqref{SysLorenz3}. The point $N_6(3.7273, \, 2.0909, \, 0.7317) \in D_2$ corresponds to the stable regime, in the domain $D_2$ the inequalities $0 <   r < \left(  \frac{b}{c} +  1\right)(b -1)$ imply the stability of system \eqref{SysLorenz3} (see Fig.~\ref{fig:grShapN6}). Note that for the classical parameters of the Lorenz system $c=10, \, b=8/3, \, r=24$ (the point  $N_5 \in D{}_2$) system \eqref{SysLorenz3} has the absorbing set and its equilibrium states are unstable. 

\begin{figure}[ht]
  \centering
  \includegraphics[width=0.5\linewidth]{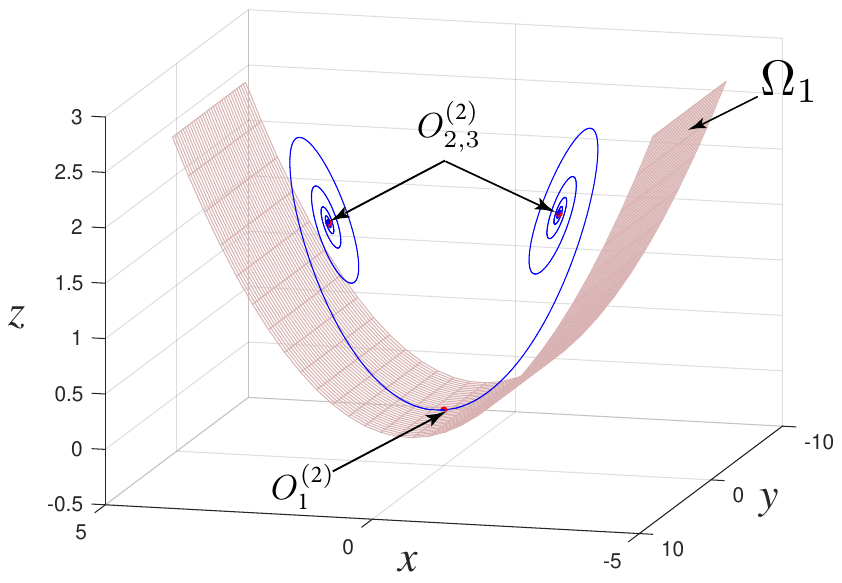}
  \caption{The stable regime of system~\eqref{SysLorenz3} 
  with parameters set at $r = 7.3171$, $b = 2.0909$, $c =0.7317$.}
  \label{fig:grShapN6}
\end{figure}

Thus, we showed that inside absorbing set $\mathcal{B}$ all trajectories of system  \eqref{SysLorenz3} not only enter in $\mathcal{B}$, but also tend to the stationary set defined by the relations \eqref{CondStabFin_AbsSet2}. A similar conclusion follows for system \eqref{SysShap6} by inverse transformation \eqref{InvtrfLorSh} to system \eqref{SysLorenz3}.

\section{Chaotic dynamics}

 Along with the problem of studying the stability of the mid-size firm model \eqref{SysShap6} formulated in \cite{ShapovalovKBA-2004,ShapovalKaz-2015}, as well the chaotic behavior of the model was analyzed. For example \cite{GurinaD-2010,ShapovalovKBA-2004,ShapovalKaz-2015}, the authors showed that this system exhibits chaotic behavior for the values of parameters \eqref{par:shapovalov}, which, taking into account the transformation \eqref{TrfShLor}, correspond to the following values of the parameters of system \eqref{SysLorenz3} $r = 0.2439$, $b = 0.4762$, $c = 1.9524$. Using numerical experiments, we have analyzed the chaotic dynamics of system \eqref{SysShap6}  and visualize a self-excited attractor for  Shapovalov's values of parameters \eqref{par:shapovalov}. On this attractor, along with the corresponding solution for system \eqref{SysShap6}, we have obtained some estimates, which allows us to calculate characteristics including the dimension of the attractor and entropy.

For a dynamic system $\varphi^t(u_0) = u(t,u_0)$
generated by system \eqref{SysShap6}
we consider the concept of the
\emph{finite-time Lyapunov dimension} \cite{Kuznetsov-2016-PLA,KuznetsovLMPS-2018},
which is convenient for carrying out numerical experiments with finite time:
\[
   \dim_{\rm L}(t, u) =
   j(t,u) + \tfrac{{\rm LE}_1\!(t,u) + \ldots + {\rm LE}_{j(t,u)}\!(t,u)
   }{|{\rm LE}_{j(t,u)\!+\!1}(t,u)|},
\]
where $j(t,u) = \max\{m: \sum_{i=1}^{m}{\rm LE}_i(t,u) \geq 0\}$, and
$\{{\rm LE}_i(t,u)\}_1^3$ is an ordered set of \emph{finite-time Lyapunov exponents} (FTLEs).
Then the \emph{finite-time Lyapunov dimension}
of a dynamic system generated by \eqref{SysShap6}
on compact invariant set $\mathcal{A}$ is defined as:
$\dim_{\rm L}(t, \mathcal{A}) = \sup\limits_{u \in \mathcal{A}} \dim_{\rm L}(t,u)$.
According to \emph{Douady--Oesterl\'{e} theorem},
for any fixed $t > 0$ the FTLD is an upper estimate of the Hausdorff dimension:
$\dim_{\rm H} \mathcal{A} \leq \dim_{\rm L}(t, \mathcal{A})$.
The best estimation is the \emph{Lyapunov dimension} \cite{Kuznetsov-2016-PLA}:
$\dim_{\rm L} \mathcal{A} = \inf_{t >0}\sup\limits_{u \in \mathcal{A}} \dim_{\rm L}(t,u)$.

In Fig.~\ref{fig:shapovalov:grid} shows the grid of points $\mathcal{C}_{\rm grid}$
filling the attractor:
the grid of points fills cuboid
$\mathcal{C} = [-1,1] \times [-2.5,2.5] \times [0.1,2.5]$
rotated by 45 degrees around the $z$-axis,
with the distance between points equal to $0.5$ (see Fig.~\ref{fig:shapovalov:grid}).
The time interval considered is $[0,\, T=500]$, $k=1000$, $\tau=0.5$,
and the integration method is MATLAB ode45 with predefined parameters.
The infimum on the time interval
is computed at the points $\{t_k\}_{1}^{N}$ at time step $\tau=t_{i+1}-t_i=0.5$.
Note that if, for a certain time, $t=t_k$ the computed trajectory is out of the cuboid,
the corresponding value of finite-time local Lyapunov dimension
is not taken into account in the computation of the maximum
of the finite-time local Lyapunov dimension
(e.g. if there are trajectories with initial conditions in cuboid,
which tend to infinity).

For the set of parameters considered, 
we use a MATLAB realization of {\it the adaptive algorithm of finite-time Lyapunov dimension
and Lyapunov exponents computation}~\cite{KuznetsovLMPS-2018}
and obtain the maximum of the finite-time local Lyapunov dimensions at the points of grid
($\displaystyle\max_{u \in \mathcal{C}_{\rm grid}} \dim_{\rm L}(t,u)$, at the time points $t=t_k=0.5\,k$ $(k=1,..,1000)$).
For parameters $r = 0.2439$, $b = 0.4762$, $c = 1.9524$
we get
$\max_{u \in \mathcal{C}_{\rm grid}} \dim_{\rm L}(100,u)=2.0699$ 
  $\max_{u \in \mathcal{C}_{\rm grid}} \dim_{\rm L}(500,u)=2.0676$.

\begin{figure}[ht]
  \centering
  \includegraphics[width=0.5\linewidth]{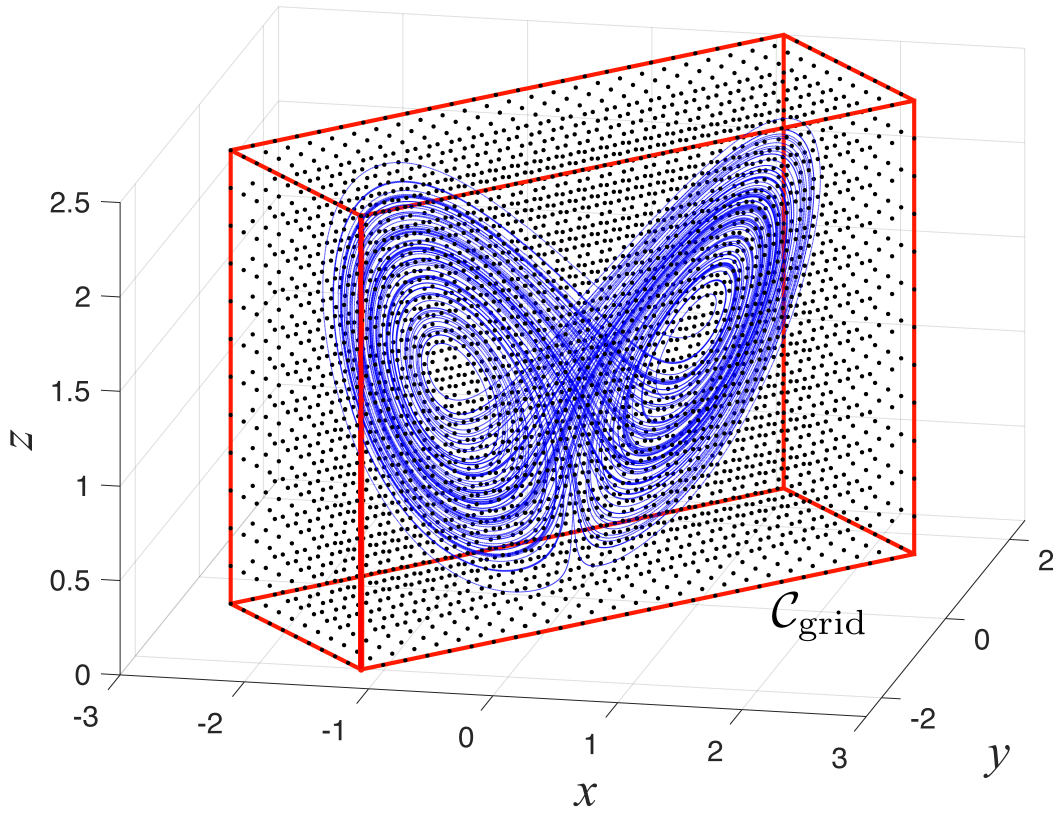}
  \caption{Localization of the chaotic attractor of system~\eqref{SysLorenz3}
  with parameters set at $r = 0.2439$, $b = 0.4762$, $c = 1.9524$ by the
  cuboid $\mathcal{C}$ and the
  corresponding grid of points $\mathcal{C}_{\rm grid}$.}
  \label{fig:shapovalov:grid}
\end{figure}

This estimation is consistent with the hypothesis on the Lyapunov dimension of a self-excited attractor. Here the difficulties of reliable estimation of the Lyapunov exponents and dimension along one randomly chosen trajectory over long-time interval are caused by unstable periodic orbits embedded into attractor and finite precision numerical integration of ODE (see \cite{AlexeevaBKM-2020-IFAC}). Note that the existence of multistability with hidden attractors in the model under consideration, as well as for the classical Lorenz system, is an open problem \cite{ChenKLM-2017-IJBC,KuznetsovM-2019-AIP,LeonovK-2015-AMC,LeonovKM-2015-EPJST,SprottM-2018-HA}.

\section*{Conclusion}

The complexity of analyses of the dynamics of financial and economic systems is often due to the presence of multistability, when, for different initial data, the system trajectories can converge to distinct attractors.
The coexistence of local attractors complicates forecasting the behavior of a dynamic system and estimating its quantitative characteristics, for instance, the Lyapunov dimension of the attractor.

Recent results obtained in the field of nonlinear methods of dynamic systems theory allow one to successfully analyze global and local dynamics using analytical procedures.
These studies have extensive applications including the analysis of global dynamics in a number of economic models \cite{BarnettBGMV-2020,LlibreV23-2018}.
One efficient method for analysis of these models is discovering global attractors by constructing their absorbing sets, and subsequently implementing effective analytical and numerical procedures for investigation bounded sets of initial conditions.

In this paper, we perform a global analysis of the stability of the mid-size firm model. We then derive the absorbing set, which makes it possible to localize the global attractor of system \eqref{SysShap6}, and the domains of the model parameters for which stability and instability are observed.
To characterize the chaotic dynamics of system \eqref{SysShap6}, we calculate the Lyapunov dimension of attractor for specific values of parameters.
Our work relies on an analytical approach to analyze the behavior of the model considered, which at the same time allows us to perform reliable numerical calculations of the Lyapunov dimension of the attractor of the mid-size firm model.

We believe that ongoing efforts in the field of forecasting dynamics of real processes should be focused on the development of efficient analytical and numerical procedures that allow researchers to obtain the most complete and reliable information about the dynamics of the processes.
This approach can help to expand the applicability of analytical procedures and overcome the disadvantages of numerical analysis. Hence, it opens a space for designing new control strategies for both stable regimes (including multistability, one of the most exciting phenomena in dynamic systems) and crisis processes.

\section*{Acknowledgments}
We dedicate this paper to the memory of Gennady A. Leonov (1947-2018), 
with whom we began this work in 2016.

We acknowledge support from the Russian Science Foundation (project 19-41-02002).
%\bibliographystyle{unsrt}
% \bibliographystyle{unsrt_no_in_no_and}

% \bibliographystyle{rusnat}

% \bibliographystyle{apalike}
% \bibliography{bib_full-rus-cp1251,bib_leonov,bib_nk,bib_full,bib_shap,bib_shap1,bib_stab,bib_hidden}

\end{document}